\documentclass[12pt,twoside]{article}
\pagestyle{plain}

\usepackage{latexsym, amsthm, amsmath, bbm}
\usepackage{amssymb}
\usepackage{amsfonts}
\usepackage{amstext}
\usepackage{multicol}
\usepackage{tabularx}
\usepackage{hyperref}
\usepackage{url}
\usepackage{appendix}
\usepackage{graphicx}
\usepackage{color}
\usepackage{array,multirow}
\newtheorem{Theorem}{Theorem}

\newtheorem{Proposition}{Proposition}

\newtheorem{Example}{Example}
\newtheorem{Corollary}{Corollary}

\newtheorem{Claim}{Claim}

\theoremstyle{remark}
\newtheorem{Remark}{Remark}

\newcolumntype{L}[1]{>{\raggedright\arraybackslash}p{#1}}
\newcolumntype{C}[1]{>{\centering\arraybackslash}p{#1}}
\newcolumntype{R}[1]{>{\raggedleft\arraybackslash}p{#1}}
\newcommand{\R}{{\mathbb{R}}}

\newcommand{\Z}{{\mathbb{Z}}}

\newcommand{\D}{{\mathcal{D}}}
\setlength{\topmargin}{0in}        
\setlength{\oddsidemargin}{0in}    
\setlength{\evensidemargin}{0in}   
\setlength{\textheight}{8.5in}     
\setlength{\textwidth}{6.25in}     
\setlength{\headsep}{0in}          
\setlength{\headheight}{0in}       


\title{The Efficiency Gap, Voter Turnout, and the Efficiency Principle}

\author{
Ellen Veomett \\
\small Department of Mathematics and Computer Science \\[-0.8ex]
\small Saint Mary's College of California \\[-0.8ex]
\small Moraga, California, U.S.A \\
\small \tt erv2@stmarys-ca.edu
}

\begin{document}

\maketitle

\begin{abstract}
  Recently, scholars from law and political science have introduced  metrics which use only election outcomes (and not district geometry) to assess the presence of partisan gerrymandering.   The most high-profile example of such a tool is the \emph{efficiency gap}.   Some scholars have suggested that such tools   should be sensitive enough to alert us when two election outcomes have the same percentage of votes going to political party $A$, but one of the two awards party $A$ more  seats.  When a metric is able to distinguish election outcomes in this way, that metric is said to satisfy the \emph{efficiency principle}.  

In this article, we show that the efficiency gap fails to satisfy the efficiency principle.   We show precisely how the efficiency principle breaks down in the presence of unequal voter turnout.  To do this, we first present a construction that, given any rationals $1/4< V<3/4$ and $0<S<1$, constructs an election outcome with vote share $V$, seat share $S$, and EG = 0.   (For instance, one party can get 26\% of the vote and anywhere from 1\% to 99\% of the seats while the efficiency gap remains zero.)   Then, for any election with vote share  $1/4<V<3/4$, seat share $S$, and EG= 0, we express  the ratio $\rho$ of average turnout in districts  party $A$ lost to average turnout in districts  party $A$ won as a function in only $V$ and $S$.  It is well known that when all districts have equal turnout, EG can be expressed as a simple formula in $V$ and $S$;  we express the efficiency gap of \emph{any} election as an equation only in $V, S,$ and $\rho$.  We also report on the values of $\rho$ that can be observed in actual elections.
\end{abstract}

\section{Introduction}

Gerrymandering is an issue that has had a long history in the American democracy.  The term ``gerrymander'' started with the 1812 redistricting of Massachusetts.  In that year, Governor Elbridge Gerry participated in a redistricting that successfully kept Governor Gerry's Democratic-Republican party in power in the state senate.  

In the 1812 redistricting of Massachusetts, the telltale sign that the redistricting effort was intended to benefit one political party was the bizarre salamander shape of one of the resulting districts (hence the name ``Gerry-mander'').  While the irregular shape of districts has long been the focus of claims that districts have been gerrymandered, deciding whether a shape is reasonable or not is surprisingly difficult.  For example, one could consider a district's shape to be a problem if its boundary is too long, given the contained area.  But upon consideration of a state with a long coastline, or even the fact that boundaries can become very irregular simply upon ``zooming in'' with a digital map,  this boundary to area ratio reveals its weakness as a tool to measure unfairness, as M. Duchin and B. Tenner recently discuss in their preprint \cite{DuchinTenner}.  Indeed, this issue was mentioned in Schwartzberg's 1966 paper, when he emphasized using ``gross perimeter,'' which is essentially a smoothing of the actual perimeter \cite{Schwartzberg}.

Another natural way to measure the irregularity of a district's shape might be to compare how it differs from a circle, square, or other desired shape.  Or one could describe how far a district's shape is from being convex.  But Young's survey \cite{Young} easily illustrates the fact that comparing district shapes to fixed shapes leads to shapes which are visual oddities being scored as better than shapes like triangles or rectangles.  And measuring how far a shape is from being convex has its own shortcomings, as referenced in \cite{Lunday}.  

But perhaps even more important than the complication in measuring the oddities of a district's shape is the effect of modern technology on mapmaking.    Modern mapmakers can more readily make districting plans which satisfy reasonable shape requirements but produce a range of partisan results. And thus a mapmaker can easily choose from among those many maps the one which satisfies her political agenda.

Thus some have searched for a non-geometric tool that can measure how fair or unfair a redistricting plan is in terms of its partisan effects. But in order to have a useful tool, we must know exactly what that tool does and does not do. In this article we explore the tool called the efficiency gap, and we study its behavior with respect to a criterion for gerrymandering metrics called the efficiency principle. In Section \ref{EPSection} we define and discuss the efficiency principle, dividing it into statements  EP1 and EP2.  We define the efficiency gap in Section \ref{CalculatingEGSection} and explore how the efficiency gap fails to satisfy EP1 in Section \ref{EGnoEPSection}  and EP2 in Section \ref{EGnoPart2EP}.  We give a  construction in section  \ref{OneFourthSection} which shows shows that elections with EG=0 may still have a wild disproportion of seats to votes.   This construction leads us to study turnout ratios in Section \ref{TurnoutSection}, and we discuss the implications of those ratios in Section \ref{FixingTSection}.   We give some brief final comments in Section \ref{CommentsSection}.

\section{The Efficiency Principle}\label{EPSection}

We now set geometry aside and discuss tools designed to measure a partisan gerrymander based on election outcomes.   In \cite{IntroEfficiency}, McGhee introduced a property he called the ``efficiency principle,'' and Stephanopoulos and McGhee argued in \cite{MeasureMetricDebate} that this principle is necessary for any such tool.  Stephanopoulos and McGhee have a nice description of the efficiency principle in  \cite{MeasureMetricDebate}: 

\begin{quotation}
[The efficiency principle]  states that a measure of partisan gerrymandering ``must indicate a greater advantage for (against) a party when the seat share for that party increases (decreases) without any corresponding increase (decrease) in its vote share'' \cite{IntroEfficiency}. The principle would be violated, for example, if a party received 55\% of the vote and 55\% of the seats in one election, and 55\% of the vote and 60\% of the seats in another election, but a metric did not shift in the party's favor.  The principle would also be violated if a party's vote share increased from 55\% to 60\%, its seat share stayed constant at 55\%, and a metric did not register a worsening in the party's position.
\end{quotation}

First we note that, given the examples that  Stephanopoulos and McGhee stated, their intended definition of efficiency principle should instead read:

\begin{quotation}
The efficiency principle states that a measure of partisan gerrymandering must indicate \emph{both} 
\begin{enumerate}
\item[\bf EP1] a greater advantage for (against) a party when the seat share for that party increases (decreases) without any corresponding increase (decrease) in its vote share 
\item[\bf EP2] \emph{and a greater advantage against (for) a party when the vote share increases (decreases) without any corresponding increase (decrease) in the seat share}.
\end{enumerate}
\end{quotation}

The efficiency principle can be stated more mathematically as follows: consider election data consisting of a districting plan $\D$ and geographic distribution of votes $\Delta$.  Suppose $G(\D, \Delta)$ is a function intended to measure partisan gerrymandering.   Given $\D, \Delta$, $G$, and choice of party $A$, one can calculate
\begin{align*}
S(\D,\Delta) &= \text{seat share for party } A \\
V(\Delta) &= \text{vote share for party } A \\
G(\D, \Delta) &= \text{number such that a \emph{higher} value means the districting plan is \emph{more favorable}} \\
&  \quad   \text{ to party } A 
\end{align*}
Now the efficiency principle states:
\begin{equation*}
\text{EP1:   If } V(\Delta)=V(\Delta') \text{ and } S(\D, \Delta)<S(\D', \Delta') \text{ then we must have } G(\D, \Delta)<G(\D', \Delta')
\end{equation*}
\begin{equation*}
\text{EP2:   If } V(\Delta)<V(\Delta') \text{ and } S(\D, \Delta)=S(\D', \Delta') \text{ then we must have } G(\D, \Delta)>G(\D', \Delta')
\end{equation*}

Several political scientists and legal scholars believe that the efficiency principle is a key requirement of a metric to measure partisan gerrymandering.     In \cite{MeasureMetricDebate}, Stephanopoulos and McGhee listed consistency with the efficiency principle as the first of four criteria desired of any metric intended to measure partisan gerrymandering.   They also point out in \cite{MeasureMetricDebate} that McDonald and Best \cite{UnfairPartisanGerrymanders} and Gelman and King \cite{EnhancingDemocracy} also have made statements supporting the importance of the efficiency principle.   We note here that most of these scholars seem to understand the efficiency principle as being what is listed as EP1 above.  We added EP2 purely based on the preceding quote from Stephanopoulos and McGhee in \cite{MeasureMetricDebate}.

Cover critiques the efficiency principle in \cite{QuantifyingGerrymandering} and suggests what he calls the \emph{Modified Efficiency Principle}.  The Modified Efficiency Principle insists that EP1 must hold \emph{unless} the expected seat share changes under plausible variation in the vote share.

We do not take a position on the necessity of the efficiency principle, though we do note that requiring it of a tool $G$ may necessitate introducing bizarre shapes in order to keep the value of $G$ low.  Indeed, suppose that G were a score satisfying the efficiency principle, and consider two different states, both receiving 60\% of the vote for party $A$, but with one state having a very uniform distribution while the other is very clustered.  If the districting practices require keeping $G$ below some threshold, then these states will be required to find districting plans $\mathcal{D},\mathcal{D'}$ \emph{which produce similar shares of representation}.  This may necessitate completely different, and possibly visually offensive, shapes.

In Section \ref{EGnoEPSection} we will show that the efficiency gap fails both  EP1 as well as Cover's Modified Efficiency Principle.  In Section \ref{EGnoPart2EP} we show that the efficiency gap fails EP2.

\section{The Efficiency Gap}\label{EGSection}

Here we study the efficiency gap, which is a metric Stephanopoulos and McGhee introduced to measure partisan gerrymandering \cite{PartisanGerrymanderingEfficiencyGap} and the metric we focus on in this article.  This metric was used to argue the presence of partisan gerrymandering in the Wisconsin court case Whitford v. Gill \cite{WhitfordGill}  (later appealed to the Supreme Court), and has been the subject of much debate since.  (See, for example, \cite{MR3699778, MeasureMetricDebate, HowCompetitiveFairSingleMember, FlawsEfficiencyGap}).  As we shall shortly see, EG does not satisfy EP1, EP2, or the less stringent Modified Efficiency Principle.

The examples we will give involve uneven turnout between districts.  All authors have noted that for equal turnout the efficiency gap simplifies to the seat margin minus twice the vote margin, and thus satisfies the efficiency principle (see, for example, the survey \cite{TappPreprint}).  Chambers, Miller, and Sobel seem to be aware that uneven turnout can break down the efficiency principle for the efficiency gap; in a footnote in \cite{FlawsEfficiencyGap} they state ``the efficiency gap may fail to satisfy McGhee's `efficiency principle' when districts do not have equal numbers of voters.''  Cover also highlights the importance of turnout's effect on the efficiency gap in his discussion of the ``turnout gap'' in \cite{QuantifyingGerrymandering}.   We note that McGhee also states in \cite{MeasuringEfficiency} that ``This [effect of turnout] is	 the opposite of what would be expected, and a clear violation of the  EP [efficiency principle].''\footnote{In the same paper, McGhee suggests redefining $EG = S^*-2V^*$ where $S^*$ is the seat margin and $V^*$ is the vote margin.  As mentioned above, this is what EG simplifies to when turnout in all districts is equal.  This formula does nothing more or less than prescribing the seat share as a linear formula in the vote share, and thus we believe the courts will find it unsatisfying.}

In this section we will give explicit examples showing that the efficiency gap does not satisfy the efficiency principle and we will show precisely how this failure to satisfy the efficiency principle is related to turnout factors.  But before we can do this, we must first understand how to compute the efficiency gap.  Throughout this section, we will assume there are only two parties and will use $V$ to denote party $A$'s vote share (its number of votes as a proportion of the total) and $S$ to denote party $A$'s seat share (its number of seats won as a proportion of the total) in a given election.

\subsection{Calculating the Efficiency Gap}\label{CalculatingEGSection}

The efficiency gap is based on the concept of a \emph{wasted vote}.  There are two kinds of wasted votes: the losing vote and the surplus vote.  I've made a losing vote for candidate $A$ if I voted for candidate $A$ but candidate $B$ won my district.  And I've made a surplus vote if already a majority of the population in my district voted for candidate $A$, and I made yet another vote for candidate $A$ on top of that.  Both the losing vote and the surplus vote don't help my candidate get elected, so in either case my vote is wasted.  The efficiency gap subtracts the number of party $A$'s wasted votes from the number of party $B$'s wasted votes, and then divides by the total number of votes. 

More explicitly, suppose a state has $n$ districts and let $V_i^P$ be the number of votes for party $P \in \{A, B\}$ in district $i \in \{1, 2, \dots, n\}$.  Suppose that party $A$ won districts $1, 2, \dots, m$ and lost districts $m+1, m+2, \dots, n$.  Then the efficiency gap is:
\begin{align}
EG &= \frac{\sum_{i=1}^m \left(V_i^B - \left(V_i^A -\frac{V_i^A+V_i^B}{2}\right)\right) + \sum_{j=m+1}^n \left(\left(V_j^B-\frac{V_j^B+V_j^A}{2}\right) -V_j^A\right)}{\sum_{\ell=1}^n\left(V_\ell^A+V_\ell^B\right)} \notag \\
&= \frac{\sum_{i=1}^m\left( \frac{3}{2}V_i^B-\frac{1}{2}V_i^A\right) + \sum_{j=m+1}^n \left(\frac{1}{2}V_j^B - \frac{3}{2}V_j^A\right)}{\sum_{\ell=1}^n\left(V_\ell^A+V_\ell^B\right)}   \label{EG_ThreeTimes}
\end{align}

\begin{Remark}\label{ThreeToOneRemark}
The coefficients of $\frac{3}{2}$  and $\frac{1}{2}$ in equation \eqref{EG_ThreeTimes} in effect cause the efficiency gap to count losing votes three times as much as winning votes.
\end{Remark}

We note that Cover \cite{QuantifyingGerrymandering} and Nagle \cite{HowCompetitiveFairSingleMember} have commented on the fact that the ``surplus vote'' could also be calculated as simply the difference between the number of winning votes and the number of losing votes (as opposed to half of that difference).  This would change the above remark to say that losing votes would be counted twice as much as winning votes.\footnote{\label{ExtremeFootnote} We note that this new definition of surplus votes would have similar outcomes for the rest of this article.  The efficiency gap with this new definition of surplus votes would still not satisfy the efficiency principle.  Theorem \ref{OneFourthFact} would be true with now $1/3<V<2/3$, and Theorem \ref{TurnoutFact} would have a different, somewhat more extreme, ratio of turnout factors: $\frac{S(3V-2)}{(S-1)(3V-1)}$.} Interestingly, the calculation of surplus votes seems to be part of Judge Griesbach's critique of the efficiency gap.  In his dissenting opinion in Whitford v. Gill, he states 
\begin{quotation}
For example, if the Indians defeat the Cubs 8 to 2, any fan might say that the Indians ``wasted'' 5 runs, because they only needed 3 to win yet scored 8. Under the Plaintiff's theory, however, the Indians needed 5 runs to beat the Cubs that day: 4 runs to reach 50\% of the total runs, plus one to win. That, of course, is absurd.
\end{quotation}

Our constructions exploit the observations in Remark \ref{ThreeToOneRemark}.

\subsection{The Efficiency Gap Does Not Satisfy EP1}\label{EGnoEPSection}

Through examples, we shall show the following:
\begin{Proposition}
The efficiency gap does not satisfy EP1.  That is, there exist two election outcomes in which parties $A$ and $B$ receive the same proportion of the vote in both elections, party $A$ wins more districts in the second election, and yet the efficiency gap of both elections is the same.
\end{Proposition}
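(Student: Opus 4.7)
Since the proposition is an existence claim, the plan is to exhibit two explicit small elections that together violate EP1. The discussion immediately preceding the proposition already singles out the obstruction: when all districts have equal turnout, the efficiency gap reduces to a monotone function of the seat and vote margins and therefore trivially respects EP1. So any counterexample must exploit unequal turnout between districts.

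The design principle comes from Remark~\ref{ThreeToOneRemark}: in the numerator of the efficiency gap, a losing vote counts three times as much as a winning (surplus) vote. This $3{:}1$ ratio is the lever that lets us move a district from party $B$'s column into party $A$'s column while leaving $EG$ fixed. Concretely, I would first fix a symmetric baseline with $V = 1/2$, $S = 1/2$, and $EG = 0$, using two equal-turnout districts with opposite but identical margins. For the comparison election I would then set up three districts of deliberately unequal turnout in which party $A$ narrowly wins two small districts and loses a single larger district by a wide margin. Parameterizing by the small-district margin and the large-district split yields two free parameters, against which the constraints ``$V$ matches the baseline'' and ``$EG = 0$'' form a $2\times 2$ linear system; solving it should produce a clean example with $S = 2/3 > 1/2$ and $EG = 0$, whose vote counts can be cleared of denominators to give small integers.

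The main obstacle is really just bookkeeping: the system must have a solution in nonnegative integers with the correct winner in each district, and both elections must end up with exactly the same value of $V$. Since the linear system is underdetermined once we are willing to scale turnouts, a one-parameter family of solutions is essentially guaranteed, and a direct search among small vote totals should yield a valid example in a few attempts. Verification is then a direct substitution into formula~\eqref{EG_ThreeTimes} for each of the two elections, after which reading off $V$ and $S$ from the district tallies completes the argument.
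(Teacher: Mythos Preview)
Your plan is correct and follows essentially the same approach as the paper: exhibit two explicit elections with the same vote share $V$, different seat shares $S$, and $EG=0$ in both, exploiting unequal turnout and Remark~\ref{ThreeToOneRemark}. The paper carries this out with two ten-district elections (Table~\ref{RealisticExample}) chosen so that the required turnout ratio is about $1.5$, a value it then argues is empirically realistic; your minimal $2$- and $3$-district construction would work just as well for the bare existence claim (for instance $(3,1),(3,1),(2,6)$ gives $V=1/2$, $S=2/3$, $EG=0$), though note that the paper keeps the number of districts fixed across the two elections, which more directly models ``same state, different plan.''
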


We prove this Proposition with the sample elections in Table \ref{RealisticExample}.\footnote{We restrict our attention to examples with EG = 0, which suffices to prove our results.  Examples can be constructed with arbitrary fixed EG.}  

\begin{table}[h]
\begin{tabular}{|c||C{1cm}|C{1cm}|C{1cm}|C{1cm}|c||C{1cm}|C{1cm}|C{1cm}|C{1cm}|c|}\hline
& \multicolumn{5}{|c||}{Election 1} & \multicolumn{5}{c|}{Election 2} \\ \hline
District & \multicolumn{2}{c|}{Votes} & \multicolumn{2}{c|}{Wasted Votes} & Turnout & \multicolumn{2}{c|}{Votes} & \multicolumn{2}{c|}{Wasted Votes} & Turnout  \\ \hline
 & A & B & A & B & & A & B & A & B &  \\ \hline
1 & 48 & 52 & 48 & 2 & 100 & 72 & 78 & 72 & 3 & 150 \\ \hline
2 & 48 & 52 & 48 & 2 & 100 & 72 & 78 & 72 & 3 & 150 \\ \hline
3 & 48 & 52 & 48 & 2 & 100 & 72 & 78 & 72 & 3 & 150 \\ \hline
4 & 48 & 52 & 48 & 2 & 100 & 72 & 78 & 72 & 3 & 150 \\ \hline
5 & 48 & 52 & 48 & 2 & 100 & 52 & 48 & 2 & 48 & 100 \\ \hline
6 & 52 & 48 & 2 & 48 & 100 & 52 & 48 & 2 & 48 & 100  \\ \hline
7 & 52 & 48 & 2 & 48 & 100 & 52 & 48 & 2 & 48 & 100 \\ \hline
8 & 52 & 48 & 2 & 48 & 100 & 52 & 48 & 2 & 48 & 100 \\ \hline
9 & 52 & 48 & 2 & 48 & 100 & 52 & 48 & 2 & 48 & 100 \\ \hline
10 & 52 & 48 & 2 & 48 & 100 & 52 & 48 & 2 & 48 & 100 \\ \hline\hline
Total & 500 & 500 & 250 & 250 & 1000 & 600 & 600 & 300 & 300 &1200  \\ \hline
$V$ & \multicolumn{5}{|c||}{500/1000=50\%} & \multicolumn{5}{c|}{ 600/1200 = 50\%} \\ \hline
$S$ & \multicolumn{5}{|c||}{5/10=50\%} & \multicolumn{5}{c|}{ 6/10 = 60\%} \\ \hline
EG & \multicolumn{5}{|c||}{(250-250)/1000=0} & \multicolumn{5}{c|}{ (300-300)/1200=0} \\ \hline
\end{tabular}
\caption{The EG does not satisfy EP1}
\label{RealisticExample}
\end{table}

We can see that the average turnout in the districts that party $A$ loses is 1.5 times higher than the average turnout in the districts that party $A$ wins.  It is this lopsided turnout that allows the efficiency gap to be 0:  party $A$ has significantly more losing votes in the districts it lost than party $B$ does in the districts that it lost.

One might suggest that a factor of 1.5 in turnout difference is alarming, but we note that this particular number (1.5) is not different from what is seen in practice.  For example,  Tables \ref{TexasTurnoutsRep} and \ref{TexasTurnoutsDem} give the turnouts in the 2016 congressional election in Texas.  For simplicity, we count all votes in each district (even though some were for 3rd party candidates).

\begin{table}[h]
\centering
\begin{tabular}{|c||c|c|c|c|c|c|c|}\hline
District & 1 & 2 & 3 & 4 & 5 & 6 & 7  \\ \hline
Turnout & 260,409 &278,236 & 316,467 & 246,220 & 192,875 & 273,296 & 255,533 \\ \hline\hline
District &  8 &  10 &  11 & 12& 13 & 14 &  17  \\ \hline
Turnout & 236,379 & 312,600   & 225,548  & 283,115 &    221,242 &  259,685 &245,728  \\ \hline\hline
District &  19  & 21 & 22 &  23 & 24 & 25 &26  \\ \hline
Turnout &  203,475 &  356,031 &  305,543 & 228,965 &  275,635 &  310,196 & 319,080  \\ \hline\hline
District & 27 &  31&  32 &  36   \\ \cline{1-5}
Turnout & 230,580  & 284,588 &  229,171 &  218,565  \\ \cline{1-5}
\end{tabular}
\caption{Republican-won Districts in the 2016 Texas congressional election.}
\label{TexasTurnoutsRep}
\end{table}

\begin{table}[h]
\centering
\begin{tabular}{|c||c|c|c|c|c|c|c|}\hline
District &  9 & 15 &  16 & 18 & 20 & 28 &  29  \\ \hline
Turnout & 188,523 &  177,479 & 175,229 &  204,308 &187,669 &  184,442 & 131,982  \\ \hline\hline
District &  30 &  33 & 34 & 35   \\  \cline{1-5}
Turnout &   218,826 &  126,369 &  166,961 & 197,576  \\  \cline{1-5}
\end{tabular}
\caption{Democrat-won Districts in the 2016 Texas congressional election.}
\label{TexasTurnoutsDem}
\end{table}

Just glancing at Tables \ref{TexasTurnoutsRep} and \ref{TexasTurnoutsDem} we can see that the Republican-won districts have higher turnouts in general.  If we calculate the turnout ratio, we find that
\begin{equation*}
\frac{\text{average turnout in districts Republicans won}}{\text{average turnout in districts Democrats won}} = \frac{262,766.48}{178,124}\approx 1.475
\end{equation*}
which is quite close to 1.5.  

For comparison, we have data (rounded to the nearest hundredth) from the 2016 U.S. House of Representatives elections in all states with 8 or more congressional districts in  Table \ref{TurnoutTable} \cite{AZTurnout, CATurnout, FLTurnout, GATurnout, ILTurnout, INTurnout, MDTurnout, MATurnout, MITurnout, MNTurnout, MOTurnout, NJTurnout, NYTurnout, NCTurnout, OHTurnout, PATurnout, TNTurnout, TXTurnout, VATurnout, WATurnout, WITurnout}.  In this table, $n$, $\rho$  and $M/m$ are defined as follows:
\begin{align*}
n &= \text{number of districts in the state} \\
\rho &= \frac{\text{average turnout in districts Republicans won in the state}}{\text{average turnout in districts Democrats won in the state}} \\
M/m &= \frac{\text{maximum turnout in a single district in the state}}{\text{minimum turnout in a single district in the state}}
\end{align*}

\begin{table}[h]
\centering
\begin{tabular}{|c||c|c|c|c|c|c|c|c|c|c|c|}\hline
State  &AZ & CA & FL &  GA &  IL & IN & MD & MA & MI & MN & MO\\ \hline
$n$ & 9  &53 & 27 &  14 &18 & 9 & 8 & 9 &14 &8 & 8   \\  \hline
$\rho$ &1.42 & 1.11 &  1.07 & 0.99 &1.14 &1.18 &1.08 & 0 & 1.11 & 1.08 & 1.10  \\ \hline
$M/m$  & 2.15 & 4.41 & 1.62&1.55 & 2.06 &1.42 &1.18 & 1.34 & 1.47 &  1.19 & 1.34  \\    \hline \hline 
State  & NJ & NY &  NC & OH & PA &  TN & TX & VA & WA & WI  \\ \cline{1-11}
$n$ & 12 & 27 &  13 & 16 & 18 & 9 & 36 & 11 & 10& 8  \\  \cline{1-11}
$\rho$ & 1.26 & 1.11& 0.94 &1.14 & 1.00 & 1.10 &  1.48 &1.09 & 0.91 & 1.16  \\ \cline{1-11}
$M/m$ &  1.96 & 1.83 & 1.27 & 1.34 & 1.56 & 1.31 & 2.82 & 1.42 & 1.65 & 1.53  \\ \cline{1-11}
\end{tabular}
\caption{Turnout ratios in all states with at least 8 congressional districts}
\label{TurnoutTable}
\end{table}

It is important to note that some of this information may have confounding variables.  For example, in California  and Washington the primary process allows for two candidates of the same party to be the two candidates in the final election (which may, in turn, affect turnout).  Additionally, many of the states had third-party or write-in candidates.  For Table \ref{TurnoutTable}, we used the \emph{total turnout} in our calculations (including third-party or write-in candidates).  Note that nearly all of the ratios $\rho$ are larger than 1 (Massachusetts is unusual in that it had no congressional seats won by Republicans).

We note that the  example in Table \ref{RealisticExample} involved a state with a larger number of districts: 10.  In general, as we shall see in section \ref{TurnoutSection}, pairs of elections can realistically be constructed to fail EP1 when a state has a larger number of districts.  This suggests that the efficiency gap has concerning flaws when used for states with a larger number of districts.  In \cite{MR3699778}, Bernstein and Duchin also discuss the fact that the efficiency gap's lack of ``granularity'' make it problematic for states with few districts, and Stephanopoulos and McGhee themselves only compiled historical data in their original paper from races with at least 8 seats \cite{PartisanGerrymanderingEfficiencyGap}.

Finally, in Table \ref{CoverExample}, we give another example to show that the efficiency gap does not satisfy Cover's ``modified efficiency principle'' stated at the end of Section \ref{EPSection}.  

\begin{table}[h]
\begin{tabular}{|c||C{1cm}|C{1cm}|C{1cm}|C{1cm}|c||C{1cm}|C{1cm}|C{1cm}|C{1cm}|c|}\hline
& \multicolumn{5}{|c||}{Election 1} & \multicolumn{5}{c|}{Election 2} \\ \hline
District & \multicolumn{2}{c|}{Votes} & \multicolumn{2}{c|}{Wasted Votes} & Turnout & \multicolumn{2}{c|}{Votes} & \multicolumn{2}{c|}{Wasted Votes} & Turnout  \\ \hline
 & A & B & A & B & & A & B & A & B &  \\ \hline
1 & 25 & 75 & 25 & 25 & 100 &37 & 113 & 37 & 38 & 150 \\ \hline
2 & 25 & 75 & 25 & 25 & 100 &37 & 113 & 37 & 38 & 150 \\ \hline
3 & 25 & 75 & 25 & 25 & 100 &38 & 112 & 38 & 37 & 150 \\ \hline
4 & 25 & 75 & 25 & 25 & 100 &38 & 112 & 38 & 37 & 150 \\ \hline
5 & 25 & 75 & 25 & 25 & 100 &75 & 25 &25 & 25 & 100 \\ \hline
6 & 75 & 25 & 25 & 25 & 100 &75 & 25 &25 & 25 & 100 \\ \hline
7 & 75 & 25 & 25 & 25 & 100 &75 & 25 &25 & 25 & 100 \\ \hline
8& 75& 25 & 25 & 25 & 100 &75 & 25 &25 & 25 & 100 \\ \hline
9 & 75 & 25 & 25 & 25 & 100 &75 & 25 &25 & 25 & 100 \\ \hline
10 & 75 & 25 & 25 & 25 & 100 &75 & 25 &25 & 25 & 100 \\ \hline\hline
Total & 500 & 500 & 250 & 250 & 1000 & 600 & 600 & 300 & 300 &1200  \\ \hline
$V$ & \multicolumn{5}{|c||}{500/1000=50\%} & \multicolumn{5}{c|}{ 600/1200 = 50\%} \\ \hline
$S$ & \multicolumn{5}{|c||}{5/10=50\%} & \multicolumn{5}{c|}{ 6/10 = 60\%} \\ \hline
EG & \multicolumn{5}{|c||}{(250-250)/1000=0} & \multicolumn{5}{c|}{ (300-300)/1200=0} \\ \hline
\end{tabular}
\caption{An Example Where Plausible Variation in Vote Share Does Not Affect Expected Seat Share}
\label{CoverExample}
\end{table}
Tables \ref{RealisticExample} and \ref{CoverExample} show that both close races and landslides can happen in pairs of elections having the same vote share, same efficiency gap, but different seat share.  Since small variation in vote share would not affect the expected seat share in elections from Table \ref{CoverExample}, we see that the efficiency gap does not satisfy Cover's modified efficiency principle.

\subsection{The Efficiency Gap Does Not Satisfy  EP2}\label{EGnoPart2EP}

\begin{Proposition}
The efficiency gap does not satisfy EP2.  That is, there exist two elections in which party $A$ wins the same number of seats, party $A$ wins a higher percentage of the total vote in the second election, and yet the efficiency gaps of the two elections are both equal.
\end{Proposition}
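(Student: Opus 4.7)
The plan is to mirror the strategy from the proof of the previous proposition: exhibit two explicit elections, presented in a table parallel to Table \ref{RealisticExample}, in which party $A$ wins the same number of seats (so $S$ agrees) and the efficiency gap agrees (both equal $0$), yet the vote share $V$ strictly differs. Unequal turnout between won and lost districts once again provides the freedom, but now it is calibrated to move $V$ while holding $S$ fixed.

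For the first election I would reuse Election~1 of Table \ref{RealisticExample}: ten districts, each with turnout $100$, party $A$ winning five of them $52$--$48$ and losing five $48$--$52$, giving $V=1/2$, $S=1/2$, and EG$=0$. For the second I would take ten districts in which $A$ still wins exactly five (so $S=1/2$), but each won district has turnout $300$ with $A$ taking $225$ to $75$, and each lost district has turnout $200$ with $A$ taking $50$ to $150$. A direct check shows that in each won district wasted $A = 225-150 = 75$ and wasted $B = 75$, while in each lost district wasted $A = 50$ and wasted $B = 150-100 = 50$; so every district contributes $0$ to the numerator of EG, giving EG$=0$. Aggregating, $V = 1375/2500 = 55\%$ and $S = 50\%$. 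Thus passing from Election~1 to Election~2 strictly increases $V$ while leaving $S$ and EG unchanged, violating EP2 (which would demand EG strictly decrease).

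The main obstacle is the bookkeeping: choosing margins $p_w,p_l$ and turnouts $t_w,t_l$ that simultaneously satisfy the wasted-vote balance $n_w t_w(3/2 - 2p_w) = n_l t_l(2p_l - 1/2)$, the inequality $V > 1/2$, and the integer-vote-total requirement. In the symmetric setup $n_w = n_l$ I plan to use, one can verify that EG$=0$ together with $V > 1/2$ forces $p_w + p_l > 1$ and hence $t_w > t_l$, so $A$'s won districts must have both the wider margins and the higher turnouts. Choosing the landslide margins $p_w = 3/4$, $p_l = 1/4$ collapses the balance equation to $0 = 0$, so \emph{any} turnout pair with $t_w > t_l$ delivers EG$=0$ automatically; this is what makes the construction above so transparent, and it makes it equally easy to realize arbitrary target values of $V > 1/2$ (or, symmetrically, of $V < 1/2$) while keeping $S = 1/2$ and EG $= 0$.
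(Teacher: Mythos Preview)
Your proposal is correct and follows essentially the same approach as the paper: proof by explicit counterexample, exhibiting two elections with the same seat share and $\mathrm{EG}=0$ but different vote shares. The paper's own example (Table~\ref{Part2Example}) is a more compact three-district construction with $S=2/3$ and $V\approx 53\%$ versus $V\approx 58\%$, whereas you use ten districts with $S=1/2$ and $V=50\%$ versus $V=55\%$; your observation that the $75$--$25$ margin collapses the wasted-vote balance to $0=0$, so that \emph{any} turnout ratio yields $\mathrm{EG}=0$, is a nice structural point the paper does not make explicit.
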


This fact is proven by the examples in Table \ref{Part2Example}.

\begin{table}[h]
\begin{tabular}{|c||C{1cm}|C{1cm}|C{1cm}|C{1cm}|c||C{1cm}|C{1cm}|C{1cm}|C{1cm}|c|}\hline
& \multicolumn{5}{|c||}{Election 1} & \multicolumn{5}{c|}{Election 2} \\ \hline
District & \multicolumn{2}{c|}{Votes} & \multicolumn{2}{c|}{Wasted Votes} & Turnout & \multicolumn{2}{c|}{Votes} & \multicolumn{2}{c|}{Wasted Votes} & Turnout  \\ \hline
 & A & B & A & B & & A & B & A & B &  \\ \hline
1 & 70 & 30 & 20 & 30 & 100 & 75 & 25 & 25 & 25 & 100 \\ \hline
2 & 60 & 10 & 25 & 10 & 70 & 75 & 25 & 25 & 25 & 100 \\ \hline
3 & 30 & 100 & 30 & 35 & 130 & 25 & 75 & 25 & 25 & 100\\ \hline\hline
Totals & 160 & 140 & 75 & 75 & 300 & 175 & 125 & 75 & 75 & 300 \\ \hline
V & \multicolumn{5}{|c||}{$160/300 \approx 53\%$ } &      \multicolumn{5}{c|}{ $175/300 \approx 58\% $} \\ \hline
S & \multicolumn{5}{|c||}{$2/3 \approx 67\%$} & \multicolumn{5}{c|}{ $2/3 \approx 67\%$} \\ \hline
EG & \multicolumn{5}{|c||}{(75-75)/300=0} & \multicolumn{5}{c|}{ (75-75)/300=0} \\ \hline
\end{tabular}
\caption{Party $A$ Wins the Same Districts in Both Elections,   about 53\% of Votes in Election 1, about 58\% of Votes in Election 2, and EG=0 in Both Elections}
\label{Part2Example}
\end{table}

\subsection{The 1/4 Boundary for ``Hidden'' Gerrymandering}\label{OneFourthSection}

We previously saw an example showing that the efficiency gap fails EP1.  Here, we shall prove the following:

\begin{Theorem}\label{OneFourthFact}
For any rational numbers $1/4<V<3/4$ and $0<S<1$, there exists election data with vote share $V$, seat share  $S$, and  $EG = 0$.
\end{Theorem}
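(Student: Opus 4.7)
The plan is to realize the election using only two types of district: $m_1$ identical ``won'' districts in which party $A$ receives a share $p > 1/2$ of a per-district turnout $t_1$, together with $m_2$ identical ``lost'' districts in which $A$ receives a share $q < 1/2$ of a per-district turnout $t_2$. Writing $T_j$ for the group totals, a direct expansion of formula \eqref{EG_ThreeTimes} yields
\begin{equation*}
EG \;=\; \frac{T_1(3 - 4p) + T_2(1 - 4q)}{2(T_1 + T_2)}, \qquad V \;=\; \frac{pT_1 + qT_2}{T_1 + T_2}.
\end{equation*}
Setting $EG = 0$ forces the positive ratio $\rho := T_2/T_1 = (3 - 4p)/(4q - 1)$, and substituting this back into the expression for $V$ eliminates the turnouts, leaving the single clean constraint
\begin{equation*}
V \;=\; \frac{3q - p}{2(1 + 2q - 2p)}.
\end{equation*}

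My first step would be to produce rational $(p,q)$ in the open box $(1/2, 3/4) \times (1/4, 1/2)$ satisfying this relation for the prescribed $V$; throughout this box both $3 - 4p$ and $4q - 1$ are positive, so $\rho$ is automatically a positive rational. Solving gives $q(p) = [p(4V - 1) - 2V]/(4V - 3)$, a monotone rational linear-fractional function of $p$ with $q(3/4) = 1/4$ and $q(1/2) = 1/(2(3 - 4V))$. A short case split (handling $1/4 < V \le 1/2$ and $1/2 < V < 3/4$ separately) shows that the image of $(1/2, 3/4)$ under $q(\cdot)$ meets $(1/4, 1/2)$ in a nonempty open rational subinterval for every $V \in (1/4, 3/4)$; picking any rational $p$ in this subinterval and reading off $q$ and $\rho$ completes this step.

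With rational $(p, q, \rho)$ in hand, the seat share $S$ and the turnout ratio $\rho$ can be set independently: $S$ dictates the \emph{district-count} ratio $m_1 : m_2 = S : (1 - S)$, while $\rho$ only constrains the \emph{aggregate-turnout} ratio $m_2 t_2 : m_1 t_1$. I would choose positive integers $m_1, m_2$ realising the ratio $S : (1 - S)$, set $t_2/t_1 = \rho \cdot m_1/m_2$, and then multiply $t_1$ (and hence $t_2$) by a suitable common denominator so that the four counts $pt_1$, $(1-p)t_1$, $qt_2$, $(1-q)t_2$ are all positive integers. By construction the resulting election has vote share $V$, seat share $S$, and $EG = 0$.

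The main obstacle is the range analysis in the middle paragraph. The endpoints $V = 1/4$ and $V = 3/4$ force the degeneracies $q \to 1/4$ (i.e.\ $\rho \to +\infty$) and $p \to 3/4$ (i.e.\ $\rho \to 0$), respectively, which is precisely why the hypothesis $1/4 < V < 3/4$ is needed: it is exactly the condition that keeps the line defined by the $V$-equation meeting the open box where both $p$ and $q$ sit strictly inside $(1/2, 3/4)$ and $(1/4, 1/2)$. The remaining integrality bookkeeping is routine once the rational $(p, q, \rho)$ are produced.
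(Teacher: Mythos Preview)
Your proposal is correct, and the argument goes through cleanly: the range analysis for $q(p)$ is right (the map is affine in $p$ with negative slope on $(1/4,3/4)$, $q(3/4)=1/4$, and $q(1/2)=1/(2(3-4V))$, so for $V\le 1/2$ the whole image lies in $(1/4,1/2)$ while for $V>1/2$ one restricts to $p>\tfrac{4V-3/2}{4V-1}$), and the integrality clean-up is indeed routine once rational $(p,q,\rho)$ are fixed.

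That said, your route is genuinely different from the paper's. The paper does not parametrize by within-district vote shares $(p,q)$ at all. Instead it starts from the extreme election in which every won district is a shutout for $A$ (so $p=1$) and every lost district is a shutout for $B$ (so $q=0$), observes that $A$ then has fewer wasted votes when $V\le 1/2$, and then pours additional votes into the \emph{losing} districts only, in the fixed proportion $V:(1-V)$ so as to preserve the overall vote share. Because $V>1/4$, each such added block raises $A$'s wasted-vote count faster than $B$'s, so a single scalar $N$ can be tuned to force equality; the case $V\ge 1/2$ follows by swapping parties. This is a one-parameter balancing move on top of a fixed starting configuration, whereas you set up a two-parameter feasibility problem and locate a point in the interior of the box $(1/2,3/4)\times(1/4,1/2)$. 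The paper's argument is shorter and avoids the case split on $V$ versus $1/2$ inside the range analysis; your argument, on the other hand, produces non-shutout districts and makes the turnout ratio $\rho$ explicit from the outset, which dovetails nicely with the paper's subsequent Theorem~\ref{TurnoutFact}.
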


On the face of it, this Theorem is startling.  It states that there exists an election with $V = 26\%, S = 99\%$, and yet the efficiency gap is 0.  We shall prove Theorem \ref{OneFourthFact} by constructing such an election outcome.  Not surprisingly, this construction will produce quite lopsided turnouts.  Information on how lopsided the turnouts  must be  will be presented in Section \ref{TurnoutSection}.

\begin{proof}[Proof of Theorem \ref{OneFourthFact}]
Fix $S=\frac{m}{n}$ and $V$ and choose $M$ large enough so that $MV\geq m$ and  $M(1-V) \geq n-m$.  First, construct an election outcome in which party $A$ has vote share $V$ and seat share $S$ by  distributing $MV$ votes to party $A$ and no votes to party $B$ in the $m$ districts that party $A$ wins, and $M(1-V)$ votes to party $B$ and no votes to party $A$ in the $n-m$ districts that party $A$ loses.  

Now we adjust this election in a way so as to keep the vote share the same but make the efficiency gap go to 0.  

\textbf{Case 1: $1/4<V \leq1/2$.}  Note that before any adjustment has been made, party $A$ has $MV/2$ wasted votes and party $B$  has $M(1-V)2$ wasted votes.  Note that $M(1-V)/2\geq MV/2$.

We  adjust the election by distributing $NV$ votes to party $A$ and $N(1-V)$ votes to party $B$ in only the districts where party $A$ is losing, where $N$ is a positive number.  The remaining election still has vote proportion $V$ and seat proportion $S$.  But now, since $1/4<V$, the total count of wasted votes for party $A$ has gone up \emph{more} than the total count of wasted votes for party $B$.  More specifically, we know that party $A$'s wasted vote count has gone up by $NV$, while party $B$'s wasted vote count has gone up by $\frac{1}{2}N(1-V) - \frac{1}{2}NV$.  The net count of wasted votes for party $A$ increases so long as
\begin{align*}
NV &> \frac{1}{2}N(1-V)-\frac{1}{2} NV \\
V&>\frac{1}{4}
\end{align*}
 Now, by choosing $N$ appropriately, we can make the total number of wasted votes for party $A$ to be the same as the total number of wasted votes for party $B$.  Thus, we can get the efficiency gap to go to 0.

\textbf{Case 2: $1/2 \leq V < 3/4$}  In this case, we can simply exchange the roles of party $A$ and party $B$, and we are back in Case 1.  
\end{proof}

Let's see how this construction plays out in an extreme case.  Specifically, the case of party $A$ receiving 27\% of the vote and winning 9 out of 10 congressional seats.  If one follows this construction, the resulting outcome is in Table \ref{27Raw}.

\begin{table}[h]
\centering
\begin{tabular}{|c||C{1cm}|C{1cm}|C{1cm}|C{1cm}|c|}\hline
District & \multicolumn{2}{c|}{Votes} & \multicolumn{2}{c|}{Wasted Votes} & Turnout   \\ \hline
 & A & B & A & B &   \\ \hline
 1 & 155 & 492 & 155& 168.5 & 647 \\ \hline
 2 & 3 & 0 & 1.5 & 0 & 3 \\ \hline
 3 & 3 & 0 & 1.5 & 0 & 3 \\ \hline
4 & 3 & 0 & 1.5 & 0 & 3 \\ \hline
5 &3 & 0 & 1.5 & 0 & 3 \\ \hline
6 & 3 & 0 & 1.5 & 0 & 3 \\ \hline
7 & 3 & 0 & 1.5 & 0 & 3 \\ \hline
8& 3 & 0 & 1.5 & 0 & 3 \\ \hline
9 & 3 & 0 & 1.5& 0 & 3 \\ \hline
10 & 3 & 0 & 1.5 & 0 & 3 \\ \hline\hline
Total & 182 & 492 &168.5  &168.5  & 674 \\ \hline
$V$ & \multicolumn{5}{c|}{182/674=27\%} \\ \hline
$S$ & \multicolumn{5}{c|}{9/10=90\%} \\ \hline
EG & \multicolumn{5}{c|}{(168.5-168.5)/674=0} \\ \hline
\end{tabular}
\caption{$V = 27\%$, $S = 90\%$, and EG = 0}
\label{27Raw}
\end{table}

Obviously, the difference in turnout is extreme, but the fact that it is even possible to construct an example such as the one in Table \ref{27Raw} is notable.

\subsection{Turnout Factors}\label{TurnoutSection}

In Election 2 in Tables \ref{RealisticExample} and \ref{CoverExample}, we saw election outcomes having an efficiency gap of 0 where $A$ received half the votes but more than half the seats.  And in  the election from Table \ref{27Raw}, we say party $A$ winning 9/10 of the seats with 27\% of the vote and efficiency gap of 0.  For each of these examples,  the turnout was higher in districts where $A$ lost than in districts where $A$ won.    Interestingly, the mathematics of election outcomes with 0 efficiency gap tells us \emph{precisely} how lopsided such elections will be.  Define turnout ratio $\rho$ as follows:
\begin{equation*}
\rho = \frac{\text{average turnout in districts party } A \text{ lost}}{\text{average turnout in districts party } A \text{ won}}
\end{equation*}
We have the following.

\begin{Theorem}\label{TurnoutFact}
Fix  rational numbers $1/4 < V  <3/4$ and $0<S<1$.   Consider an election with vote share $V$, seat share $S$, and EG=0.  (We know that such an election exists from Theorem \ref{OneFourthFact}).  Then 
\begin{equation*}
\rho = \frac{S(3-4V)}{(1-S)(4V-1)}
\end{equation*}

\end{Theorem}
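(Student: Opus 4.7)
The plan is to start from the explicit formula for $EG$ given in equation \eqref{EG_ThreeTimes} and impose the condition $EG = 0$, then translate the resulting algebraic identity into a statement about the two total turnouts. Let $T_W$ and $T_L$ denote the total turnout in districts party $A$ won and lost respectively, and let $V_W^A, V_L^A, V_W^B, V_L^B$ denote the sums of $V_i^A$ and $V_i^B$ over those two groups of districts. If $m$ denotes the number of districts party $A$ won out of $n$ total (so $S = m/n$), then since averages are sums divided by the number of districts,
\begin{equation*}
\rho \;=\; \frac{T_L/(n-m)}{T_W/m} \;=\; \frac{S}{1-S}\cdot \frac{T_L}{T_W}.
\end{equation*}
So the entire task reduces to showing that $T_L/T_W = (3-4V)/(4V-1)$.

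Next I would set the numerator of \eqref{EG_ThreeTimes} to zero and group terms by win/loss:
\begin{equation*}
\tfrac{3}{2}V_W^B - \tfrac{1}{2}V_W^A + \tfrac{1}{2}V_L^B - \tfrac{3}{2}V_L^A \;=\; 0.
\end{equation*}
Using $V_W^A + V_W^B = T_W$ and $V_L^A + V_L^B = T_L$, I substitute $V_W^B = T_W - V_W^A$ and $V_L^B = T_L - V_L^A$ and simplify, which collapses the expression to
\begin{equation*}
3T_W + T_L \;=\; 4\bigl(V_W^A + V_L^A\bigr).
\end{equation*}
The key algebraic point is that after this substitution the separate per-side contributions of $A$'s votes merge into the single quantity $V_W^A + V_L^A$, which is the total vote count for $A$ and hence equals $V(T_W+T_L)$.

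Substituting $V_W^A + V_L^A = V(T_W+T_L)$ gives $3T_W + T_L = 4V(T_W+T_L)$, so
\begin{equation*}
(3 - 4V)\,T_W \;=\; (4V - 1)\,T_L.
\end{equation*}
The hypothesis $1/4 < V < 3/4$ is exactly what guarantees both coefficients are strictly positive, so I can divide freely to conclude $T_L/T_W = (3-4V)/(4V-1)$. Multiplying by $S/(1-S)$ yields the claimed formula. There is no real obstacle beyond careful bookkeeping; the only conceptual point worth flagging is that the EG$=0$ condition depends on the vote totals only through the coarse aggregates $T_W, T_L$ and $V(T_W+T_L)$, which is precisely why $\rho$ is determined by $V$ and $S$ alone.
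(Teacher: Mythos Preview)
Your proof is correct and considerably more direct than the paper's. The paper proceeds structurally: it first establishes Claim~\ref{ConeClaim}, showing that every election with $EG=0$ and seat share $m/n$ is a non-negative integer combination of four explicit generator vectors, and then verifies the turnout-ratio formula separately on combinations of generator types (types \eqref{one}--\eqref{two}, types \eqref{two}--\eqref{three}, etc.), checking that each yields the same $\rho$. You instead set the numerator of \eqref{EG_ThreeTimes} to zero and observe that after substituting $V_W^B = T_W - V_W^A$ and $V_L^B = T_L - V_L^A$, the per-side $A$-vote contributions collapse to the global total $V(T_W+T_L)$, giving a single linear relation between $T_W$ and $T_L$.

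What each approach buys: your argument is essentially the $EG=0$ specialization of Theorem~\ref{EGCalculationTheorem} (and indeed the paper notes in a footnote to that theorem that setting $EG=0$ there gives another proof of Theorem~\ref{TurnoutFact}); it is shorter, needs no convex-cone machinery, and makes transparent \emph{why} $\rho$ depends only on $V$ and $S$---namely, because the $EG=0$ condition involves only the aggregates $T_W$, $T_L$, and the total $A$-vote. The paper's cone decomposition, while heavier, yields the auxiliary structural fact (Claim~\ref{ConeClaim}) that all $EG=0$ elections are built from four simple building blocks, which the paper also uses to give an alternate proof of Theorem~\ref{OneFourthFact}. One small implicit assumption in your argument worth making explicit: dividing by $T_W$ requires $T_W>0$, but this follows since $(3-4V)T_W=(4V-1)T_L$ with both coefficients positive forces $T_W$ and $T_L$ to vanish together, and a vote share $V$ is only defined when total turnout is nonzero.
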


Upon inspection, this is a mathematically intuitive result in that $\rho$ goes to 0 if either $V$ goes to 3/4 or $S$ goes to 0, and $\rho$ goes to infinity if either $V$ goes to $1/4$ or $S$ goes to 1.  However, it may be surprising that this turnout ratio is an expression \emph{only} in $S$ and $V$.

Before proving Theorem \ref{TurnoutFact}, we must first prove that the set of election outcomes under consideration is a convex cone.  For the reader not familiar with vector spaces, see, for example, \cite{AntonLA}, and for the reader not familiar with convex cones, see, for example, chapter II of \cite{MR1940576}.  

Suppose the number of districts is $n$ and fix an integer $m$ with $1 \leq m \leq n-1$.  We'd like to look at the set $E$ of all election outcomes where party $A$ wins $m$ districts and EG=0 (thus here we have $S = \frac{m}{n}$).  Without loss of generality, we can assume that party $A$ wins the districts labeled $1, 2, \dots, m$ and loses the districts labeled $m+1, m+2, \dots, n$.  Each election outcome can be written as a vector $(a_1, a_2, \dots, a_n, b_1, b_2, \dots, b_n)$ where $a_i$ gives the number of votes for party $A$ in district $i$ and $b_i$ gives the number of votes for party $B$ in district $i$.  Then we can see that $E$ is the set of all vectors in $\Z^{2n}$ satisfying the following:
\begin{align*}
a_i &\geq 0 \quad \quad i=1, 2, \dots,  n \\
b_i &\geq 0 \quad \quad i=1, 2, \dots, n \\
a_i &\geq b_i \quad \quad i=1, 2, \dots, m \\
b_i &\geq  a_i \quad \quad i=m+1, m+2, \dots, n \\
& \sum_{i=1}^m b_i  + \sum_{i=m+1}^n \left(b_i-\frac{a_i+b_i}{2}\right) =  \sum_{i=m+1}^n a_i + \sum_{i=1}^m \left( a_i-\frac{a_i+b_i}{2} \right)
\end{align*}
Note that the inequalities state that the number of votes is non-negative and that party $A$ wins districts $1, 2, \dots, m$ and loses districts $m+1, m+2, \dots, n.$\footnote{Since we state the inequality $a_i \geq b_i$ for $i=1, 2, \dots, m$  not as \emph{strict} inequalities, we view an outcome of $a_i=b_i$ as being the case where both parties get the same number of votes but after drawing a name out of a bowl (or whatever other electoral procedure) party $A$ won the district. We similarly interpret the cases where $b_j=a_j$, $j=m+1, m+2, \dots, n$. }  The equality states that EG=0.  

For ease, we need to introduce some notation.  Let $\alpha_i$ denote the vector indicating a vote of 1 for party $A$ in district $i$ and votes of 0 everywhere else.  Let $\beta_i$ denote the vector indicating a vote of 1 for party $B$ in district $i$ and votes of 0 everywhere else.  That is, if $e_i\in \R^{2n}$ denotes the $i$th standard basis vector, then 
\begin{align*}
\alpha_i &= e_i \\
\beta_i &= e_{n+i}
\end{align*}

We shall first prove the following:

\begin{Claim}\label{ConeClaim}
Every vector in $E$ can written as a non-negative integer combination of vectors of one of the following forms:
\begin{align}\label{one}
 & \alpha_i+\alpha_j+\alpha_k+\beta_i & \quad \quad \text{ for } i, j, k \in \{1, 2, \dots, m\} \\ \label{two}
& \beta_i+\beta_j+\beta_k+\alpha_i &\quad \quad \text{ for } i, j, k \in \{m+1, m+2, \dots, n\} \\ \label{three}
 &\alpha_i+\beta_j &\quad \quad \text{ for } i \in \{1, 2, \dots, m\}, j \in \{m+1, m+2, \dots, n\} \\ \label{four}
& \alpha_i+\beta_i+\alpha_j+\beta_j & \quad \quad\text{ for } i \in \{1, 2, \dots, m\}, j \in \{m+1, m+2, \dots, n\} 
\end{align}

\end{Claim}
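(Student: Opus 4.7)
The plan is to reduce the claim to a simple counting identity by ``atomizing'' each district's vote count and then showing that the four generator shapes are exactly what is needed to reassemble the atoms. In each win district $i \leq m$, write $(a_i, b_i) = u_i(1, 0) + b_i(1, 1)$ with $u_i = a_i - b_i \geq 0$; in each loss district $j > m$, write $(a_j, b_j) = a_j(1, 1) + v_j(0, 1)$ with $v_j = b_j - a_j \geq 0$. I shall call the $(1,0)$ pieces \emph{A-atoms}, the $(0,1)$ pieces \emph{B-atoms}, and the $(1,1)$ pieces \emph{tie-atoms} (tagged as win or loss according to where they sit). Set $U = \sum_i u_i$, $W = \sum_j v_j$, $B = \sum_i b_i$, and $A = \sum_j a_j$. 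Substituting $a_i = u_i + b_i$ and $b_j = v_j + a_j$ into the EG${}=0$ equation reduces it to the clean aggregate identity $U + 2A = W + 2B$.

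A direct inspection of each generator (checking each possible coincidence pattern of the indices $i,j,k$) shows exactly which atoms it packages: \eqref{one} combines $2$ A-atoms with $1$ win-tie-atom, \eqref{two} combines $2$ B-atoms with $1$ loss-tie-atom, \eqref{three} pairs $1$ A-atom with $1$ B-atom, and \eqref{four} pairs $1$ win-tie-atom with $1$ loss-tie-atom. The decomposition then proceeds in three phases. Assume without loss of generality $B \geq A$ (the opposite case is symmetric, swapping the roles of win and loss and using \eqref{two} in place of \eqref{one}); the identity then forces $U \geq W$, and in fact $U - W = 2(B-A)$. First, apply \eqref{four} exactly $A$ times, each time pairing a loss-tie-atom with a win-tie-atom, using up all $A$ loss-tie-atoms and $A$ of the $B$ win-tie-atoms. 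Second, apply \eqref{three} exactly $W$ times, each time pairing a B-atom with an A-atom, using up all $W$ B-atoms and $W$ of the $U$ A-atoms. What remains is exactly $2(B-A)$ A-atoms together with $B-A$ win-tie-atoms, which is precisely $B-A$ copies of generator \eqref{one}.

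The main place that needs care is the district-level bookkeeping: the atoms live in specific districts, so I should confirm that the matchings in each phase can actually be realized. This turns out to be automatic, because generator \eqref{one} allows the win-district indices $i, j, k$ to be chosen freely and even to coincide (several atoms may be pulled from the same district), generator \eqref{two} is analogous on the loss side, and \eqref{three}, \eqref{four} merely pair one win-district index with one loss-district index subject to no further constraint. Hence any assignment of the remaining atoms to generator slots is legal, and the identity $U + 2A = W + 2B$ is precisely what makes the arithmetic balance at each phase with no leftover atoms. A one-time sanity check, easy from the atom packaging, confirms that each of the four generator vectors itself satisfies the defining inequalities of $E$ and contributes $0$ to the EG quantity.
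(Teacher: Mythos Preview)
Your argument is correct and takes a genuinely different route from the paper's. The paper proves the claim by minimal-counterexample descent: assume some $\epsilon\in E$ cannot be decomposed, pick one with the smallest coordinate sum, and in each of four cases (only $B$ has losing votes, only $A$ has losing votes, both do, neither does) subtract off one generator to land back in $E$ with a strictly smaller vector, contradicting minimality. Your proof is instead fully constructive: you atomize each district into $A$-atoms, $B$-atoms, and tie-atoms, reduce the EG${}=0$ constraint to the clean aggregate identity $U+2A=W+2B$, observe that the four generators package exactly the atom combinations $(2,0,1,0)$, $(0,2,0,1)$, $(1,1,0,0)$, $(0,0,1,1)$ in $(A\text{-atom},B\text{-atom},\text{win-tie},\text{loss-tie})$ coordinates, and then clear the atoms in three explicit phases. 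The paper's descent is shorter to write and requires no bookkeeping beyond ``find one generator you can subtract''; your approach is longer but more transparent, since it explains \emph{why} precisely these four shapes suffice (they are the minimal balanced atom packets) and it outputs an explicit decomposition rather than merely asserting one exists. The district-level freedom you note---that $i,j,k$ in \eqref{one} may coincide and that \eqref{three},\eqref{four} impose no cross-district constraint---is exactly what makes the greedy phase-by-phase matching go through without obstruction.
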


The reader will notice that we exploit Remark \ref{ThreeToOneRemark} in our proof.

\begin{proof}[Proof of Claim \ref{ConeClaim}]
We proceed with proof by contradiction.  Suppose our Claim were false, and there existed a vector in $E$ which could not be written as a non-negative integer combination of vectors of type \eqref{one}, \eqref{two}, \eqref{three}, and \eqref{four}.  Since all of the vectors in $E$ have positive integer coordinates, we can find one such vector such that the sum of that vector's coordinates is minimized.  That is, $\epsilon = (a_1, a_2, \dots, a_n, b_1, b_2, \dots, b_n) \in E \subset \Z^{2n}$ is chosen such that
\begin{equation*}
\sum_{i=1}^n\left(a_i+b_i\right)
\end{equation*}
is minimized.  Certainly $\epsilon \not= \vec{0}$ because $\vec{0}$ is a non-negative integer combination of the above vectors (the linear combination with all coefficients 0).  Thus, $\epsilon$ has a positive coordinate.  

\textbf{Case 1: $\epsilon$ has positive coordinates corresponding to losing votes, but only for party $B$.}   In this case, the only place where party $A$ can have wasted votes is with surplus votes.  Party $A$ must firstly win the district in which $B$ has a losing vote, and since EG=0 for election $\epsilon$,  party $A$ must have at least 2 surplus votes.  Thus, we can find $i, j, k \in \{1, 2, \dots, m\} $ for which
\begin{equation*}
\epsilon - \left( \alpha_i+\alpha_j+\alpha_k+\beta_i \right) \in E
\end{equation*}
contradicting that $\epsilon$ was the element in $E$ with smallest sum of coordinates which could not be written as a non-negative integer combination of the above vectors.

\textbf{Case 2: $\epsilon$ has positive coordinates corresponding to losing votes, but only for party $A$.}  By the same argument as in case 1, we find a contradiction.

\textbf{Case 3: both parties $A$ and $B$ have a losing vote.}  In this case, we can find an $i \in \{1, 2, \dots, m\}, j \in \{m+1, m+2, \dots, n\} $ such that
\begin{equation*}
\epsilon-\left( \alpha_i+\beta_i+\alpha_j+\beta_j \right) \in E
\end{equation*}
again contradicting that $\epsilon$ was the element in $E$ with smallest sum of coordinates which could not be written as a non-negative integer combination of the above vectors.

\textbf{Case 4: $\epsilon$ does \emph{not} have any positive coordinates which are losing votes.}  Here, $\epsilon$ only has positive coordinates which are winning votes.  Again, since EG=0 for election $\epsilon$, there is some $ i \in \{1, 2, \dots, m\}, j \in \{m+1, m+2, \dots, n\}$ for which
\begin{equation*}
\epsilon - \left(\alpha_i+\beta_j \right) \in E
\end{equation*}
again contradicting that $\epsilon$ was the element in $E$ with smallest sum of coordinates which could not be written as a non-negative integer combination of the above vectors.

We have exhausted all cases, and thus we must conclude that Claim \ref{ConeClaim} is true.
\end{proof}

Note that Claim \ref{ConeClaim} allows us to see an alternate proof of Theorem \ref{OneFourthFact}.  Suppose we want to find an election outcome with EG=0  in which party $A$ has seat share $S = \frac{m}{n}$ and vote share $V$.  By Claim \ref{ConeClaim}, we need only form a non-negative integer combination of the vectors in equations \eqref{one}, \eqref{two}, \eqref{three}, and \eqref{four} with the \emph{additional} constraint that 
\begin{equation*}
\sum_{i=1}^n a_i = V\left(\sum_{i=1}^n(a_i+b_i)\right)
\end{equation*}
Upon inspection of the proportion of votes given to each party in equations \eqref{one}, \eqref{two}, \eqref{three}, and \eqref{four}, this can be achieved nontrivially (meaning without any districts with a tie between parties $A$ and $B$) so long as $1/4<V<3/4$.  

With Claim \ref{ConeClaim} in hand, we can now prove Theorem \ref{TurnoutFact}.

\begin{proof}[Proof of Theorem \ref{TurnoutFact}]
Consider an election in which party $A$ has vote share $1/4<V<3/4$ and seat share $S = \frac{m}{n}$ (winning $m$ seats) and  EG=0.   From Claim \ref{ConeClaim} we know that this election is a non-negative integer combination of vectors of types \eqref{one}, \eqref{two}, \eqref{three}, and \eqref{four}.

%

\textbf{Case 1:  $1/4<V\leq1/2$.}  First  we suppose the election under consideration is a non-negative integer combination only of vectors of types \eqref{one} and \eqref{two}.  Let $k_{\ref{one}}$ be the sum of the coefficients  of vectors of type \eqref{one} and $k_{\ref{two}}$ be the sum of the coefficients of vectors of type \eqref{two} in that conic combination.  Then, to satisfy these constraints, we must have
\begin{align*}
3k_{\ref{one}}+k_{\ref{two}} &= TV \\
k_{\ref{one}}+3k_{\ref{two}} &= T(1-V)
\end{align*}
where $T$ is the total turnout.  Solving this pair of linear inequalities for $k_{\ref{one}}$ and $k_{\ref{two}}$ gives
\begin{align*}
k_{\ref{one}} &= \frac{1}{8}T(4V-1) \\
k_{\ref{two}}&= \frac{1}{8}T(3-4V)
\end{align*}
With these coefficients, the districts where party $A$ wins will get a total of $4k_{\ref{one}}$ votes and the districts where party $A$ loses will get a total of $4k_{\ref{two}}$ votes.  Thus we calculate: 
\begin{equation*}
\rho = \frac{\frac{4k_{\ref{two}}}{(n-m)}}{\frac{4k_{\ref{one}}}{m}} = \frac{ m\frac{1}{2}T(3-4V)}{(n-m)\frac{1}{2}T(4V-1)} = \frac{S(3-4V)}{(1-S)(4V-1)}
\end{equation*}
which is the statement of Theorem \ref{TurnoutFact}.  

Now suppose instead our election is a non-negative integer combination only vectors of type \eqref{two} and \eqref{three}, and  $k_{\ref{two}}$ is the sum of the coefficients of vectors of type \eqref{two} and $k_{\ref{three}}$ is the sum of the coefficients of vectors of type \eqref{three}.   To satisfy our election constraints (again assuming our election has $V$ total votes) we have
\begin{align*}
k_{\ref{two}}+k_{\ref{three}} &= TV \\
3k_{\ref{two}}+k_{\ref{three}} &= T(1-V)
\end{align*}
Solving this pair of linear inequalities for $k_{\ref{two}}$ and $k_{\ref{three}}$ gives
\begin{align*}
k_{\ref{two}} &= \frac{1}{2}T(1-2V) \\
k_{\ref{three}} &= \frac{1}{2}T(4V-1)
\end{align*}
In this case the districts where party $A$ wins will get a total of $k_{\ref{three}}$ votes and the districts where party $A$ loses get a total of $4k_{\ref{two}}+k_{\ref{three}}$ votes.  Thus we calculate: 
\begin{equation*}
\rho = \frac{\frac{4k_{\ref{two}}+k_{\ref{three}}}{n-m}}{\frac{k_{\ref{three}}}{m}} = \frac{m\left(2T(1-2V)+\frac{1}{2}T(4V-1)\right)}{(n-m)\frac{1}{2}T(4V-1)} = \frac{m\left(\frac{3}{2}-2V\right)}{(n-m)(2V-\frac{1}{2})} =\frac{S(3-4V)}{(1-S)(4V-1)}
\end{equation*}
which is again the same.  

 Since $V\leq 1/2$, any non-negative integer combination of vectors of type \eqref{one}, \eqref{two}, and \eqref{three} giving a proportion of votes $V$ to party $A$ can be separated into a sum of vectors of types \eqref{one} and \eqref{two} giving a proportion of votes $V$ to party $A$ and another sum of vectors of types \eqref{two} and \eqref{three} also giving a proportion of votes $V$ to party $A$.  Finally, note that  the distribution of votes among parties $A$ and $B$ is the same using vectors of the form \eqref{three} and \eqref{four}.   Thus, any election with EG=0, vote share $V$ and seat share $S = \frac{m}{n}$ must have turnout ratio $\rho= \frac{S(3-4V)}{(1-S)(4V-1)}$.
 
 \textbf{Case 2:  $1/2<V<3/4$.}  Note that  party $B$'s seat share is $1-S$ and vote share is $1-V$ with $1/4<1-V<1/2$.  Thus, putting party $B$ in the role of party $A$ in Theorem \ref{TurnoutFact}, we can use what we've already proved to get 
\begin{align*}
\rho &=\frac{\text{average turnout in districts party } A \text{ lost}}{\text{average turnout in districts party } A \text{ won}}  \\
&=  \left( \frac{\text{average turnout in districts party } B \text{ lost}}{\text{average turnout in districts party } B \text{ won}}\right)^{-1} \\
&= \left(  \frac{(1-S)(3-4(1-V))}{(1-(1-S))(4(1-V)-1)}   \right)^{-1} = \frac{S(3-4V)}{(1-S)(4V-1)}
\end{align*}

Thus Theorem \ref{TurnoutFact} is proved.\footnote{We note that Theorem \ref{TurnoutFact} is still true for the boundary cases $V = 1/4$ and $V = 3/4$.  
 If $V=1/4$, this would correspond to an election where all of the districts that party $A$ wins have NO voter turnout (0 votes for both parties) and after pulling a name out of a bowl (or other electoral procedure), the seat is given to party $A$.  The districts where Party $A$ loses have 25\% of the vote to party $A$ and 75\% to party $B$.  In this election, the turnout factor from Theorem \ref{TurnoutFact} would be infinite, which is appropriate given the fact that there is 0 voter turnout in the districts that party $A$ wins.
 
  If $V=3/4$, this would correspond to an election where all of the districts that party $A$ \emph{loses} have no voter turnout and after pulling a name out of a bowl (or other electoral procedure), the seat is given to party $B$.  The districts where Party $A$ wins have 75\% of the vote to party $A$ and 25\% to party $B$.  In this election, the turnout factor from Theorem \ref{TurnoutFact} would be 0, which is again appropriate given the fact that there is 0 voter turnout in the districts which party $A$ lost. }
\end{proof}

\subsection{Implications of Theorem  \ref{TurnoutFact}}\label{FixingTSection}

Recall that we define $ \frac{\text{average turnout in districts } A \text{ lost}}{\text{average turnout in districts } A \text{ won}} =\rho$ and when $1/4<V<3/4$ and EG=0 we have
\begin{equation*}
\rho=  \frac{S(3-4V)}{(1-S)(4V-1)}
\end{equation*}
Solving this equation for $S$ we have
\begin{equation}\label{S_equation}
S = \frac{\rho(4V-1)}{\rho(4V-1)+3-4V}
\end{equation}
This equation gives the EG-preferred seat share when the vote share is $1/4<V<3/4$ and the turnout ratio is $\rho$.  
\begin{Example}
Recall that in Tables \ref{TexasTurnoutsRep} and \ref{TexasTurnoutsDem}, we saw that in the 2016 Texas congressional elections,
\begin{equation*}
\frac{\text{average turnout in districts Republicans won}}{\text{average turnout in districts Democrats won}} = \frac{262,766.48}{178,124}\approx 1.475
\end{equation*}
We can see that the turnout ratio in our Texas example skews the efficiency gap in favor of the Democrats.  Letting Party $A$ be the Democratic party, when the vote share is 50\%, the EG-preferred Democratic seat share for this Texas election is 
\begin{equation*}
S =\frac{\rho(4V-1)}{\rho(4V-1)+3-4V} = \frac{1.475(4(0.5)-1)}{4(0.5)(1.475-1)+3-1.475} \approx 60\%
\end{equation*}
(If the turnout in each district were equal, when the vote share is 50\% the EG-preferred seat share would also be 50\%).

\end{Example}

The upshot of equation \eqref{S_equation} is:  while holding other factors equal, the more strongly that party $A$ obtains lower turnout in districts it wins than in districts it loses, the more strongly EG reports a gerrymander favoring party $B$.  We can also see this with graphs of the EG-preferred $S$ given in equation \eqref{S_equation}.   These can be seen in Figure \ref{TurnoutGraphs}.

\begin{figure}[h]
\centering
\includegraphics[width=4.5in]{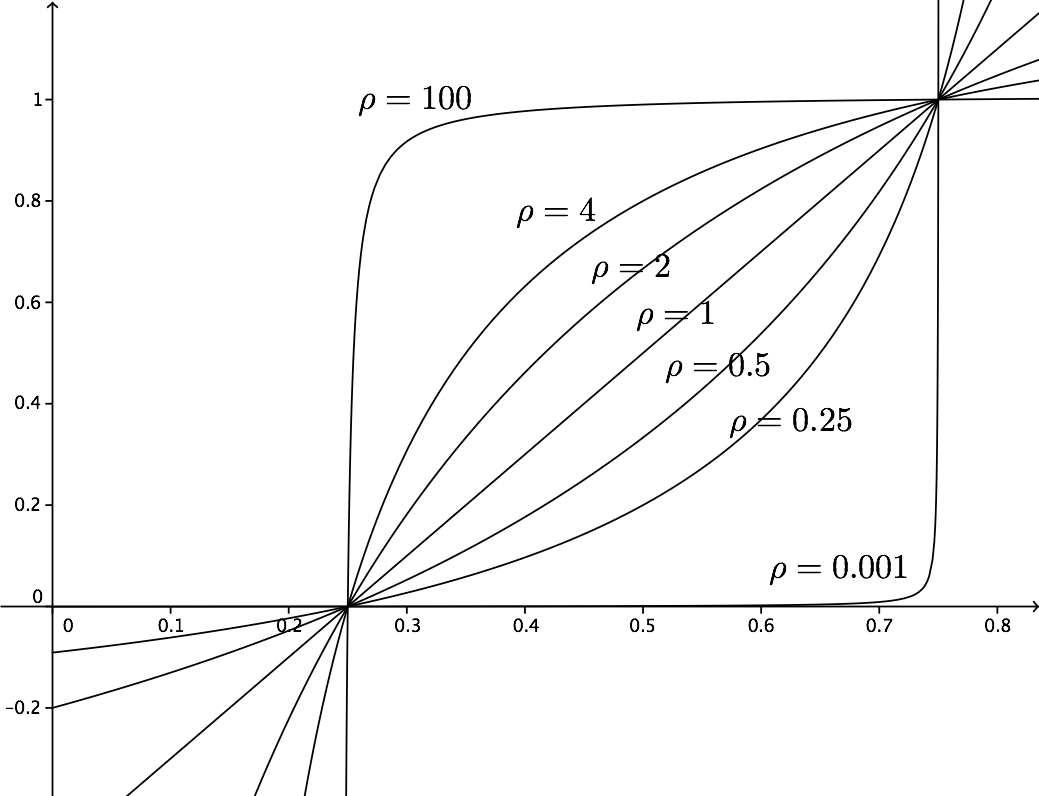}
\caption{Graphs of  EG-preferred $S$ as a function in $V$, for varying values of $\rho$.}
\label{TurnoutGraphs}
\end{figure}

Note that, as $\rho$ increases to infinity, EG=0 requires party $A$ to receives nearly all of the seats no matter their votes.  As $\rho$ decreases to 0, the efficiency gap only thinks that an election outcome is ``perfectly fair'' if party $A$ receives almost no seats.  Also, as expected, the function for $S$ breaks down (gives values less than 0 or larger than 1) when $V<.25$ or $V>.75$.

We can show that the efficiency gap can be calculated only as a function in $S$, $V$, and $\rho$:

\begin{Theorem}\label{EGCalculationTheorem}
Consider an election with seat share $S$, vote share $V$, and turnout ratio $\rho$.  Then the efficiency gap of this election is
\begin{equation*}
EG = S^* -2V^* + \frac{S(1-S)(1-\rho)}{S(1-\rho)+\rho}
\end{equation*}
where $S^*= S-\frac{1}{2}$ is the seat margin and $V^* = V-\frac{1}{2}$ is the vote margin.\footnote{Theorem \ref{EGCalculationTheorem} gives another proof of Theorem \ref{TurnoutFact}, by setting EG=0.  We keep the proof in section \ref{TurnoutSection} because it can be generalized to a different formulation of the efficiency gap, as in footnote \ref{ExtremeFootnote}.}
\end{Theorem}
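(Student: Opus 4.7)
The plan is to compute the efficiency gap directly from its definition in equation \eqref{EG_ThreeTimes}, aggregating across the set of districts party $A$ won versus the set it lost, and then eliminating those aggregate turnouts using only $S$, $V$, and $\rho$. Let $m$ denote the number of districts $A$ won, let $T_w$ and $T_\ell$ denote the average turnouts in won and lost districts, and let $V_A^w, V_A^\ell, V_B^w, V_B^\ell$ denote the totals of each party's votes restricted to the won and lost districts, with total turnout $T = mT_w + (n-m)T_\ell$. Because the summand in \eqref{EG_ThreeTimes} is linear in each district's vote counts, the numerator depends only on these aggregates. Substituting $V_B^w = mT_w - V_A^w$ and $V_B^\ell = (n-m)T_\ell - V_A^\ell$, the coefficients of $V_A^w$ and $V_A^\ell$ collapse to $-2$, and using $V_A^w + V_A^\ell = VT$ I obtain
$$EG = \tfrac{3}{2}\cdot\frac{mT_w}{T} + \tfrac{1}{2}\cdot\frac{(n-m)T_\ell}{T} - 2V.$$

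Next I would rewrite the two turnout proportions in terms of $S=m/n$ and $\rho = T_\ell/T_w$. Dividing numerator and denominator of $mT_w/T$ by $mT_w$ gives
$$\frac{mT_w}{T} \;=\; \frac{1}{1+\tfrac{n-m}{m}\cdot\tfrac{T_\ell}{T_w}} \;=\; \frac{S}{S+(1-S)\rho},\qquad \frac{(n-m)T_\ell}{T}\;=\;\frac{(1-S)\rho}{S+(1-S)\rho}.$$
Substituting these into the expression for $EG$ and combining the $\tfrac32$ and $\tfrac12$ contributions over the common denominator yields $\;EG = \tfrac12 + \tfrac{S}{S+(1-S)\rho} - 2V$.

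It remains to show this matches the claimed formula $S^* - 2V^* + \tfrac{S(1-S)(1-\rho)}{S(1-\rho)+\rho}$. First observe $S(1-\rho)+\rho = S+(1-S)\rho$, so the denominators already agree. The key algebraic identity is
$$S \,+\, \frac{S(1-S)(1-\rho)}{S+(1-S)\rho} \;=\; \frac{S[S+(1-S)\rho] + S(1-S)(1-\rho)}{S+(1-S)\rho} \;=\; \frac{S}{S+(1-S)\rho},$$
which follows because $(1-S)\rho + (1-S)(1-\rho) = 1-S$ collapses the bracket to $S + (1-S) = 1$. Adding $S^* - 2V^* = S + \tfrac12 - 2V$ to the correction term then reproduces the expression $\tfrac12 + \tfrac{S}{S+(1-S)\rho} - 2V$ derived in the previous step.

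The main obstacle is purely bookkeeping; there is no conceptual hurdle and no case analysis on $V$ is required, because the identity holds for \emph{every} election rather than only those with $EG=0$. That generality is exactly what lets Theorem \ref{EGCalculationTheorem} recover Theorem \ref{TurnoutFact} by setting $EG=0$ and solving for $\rho$, as the footnote claims. No appeal to Claim \ref{ConeClaim} is needed.
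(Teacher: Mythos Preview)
Your argument is correct. The computation from \eqref{EG_ThreeTimes} through the substitutions $V_B^w = mT_w - V_A^w$, $V_B^\ell = (n-m)T_\ell - V_A^\ell$, and $V_A^w + V_A^\ell = VT$ does indeed collapse to $EG = \tfrac12 + \tfrac{S}{S+(1-S)\rho} - 2V$, and your algebraic identity $S + \tfrac{S(1-S)(1-\rho)}{S+(1-S)\rho} = \tfrac{S}{S+(1-S)\rho}$ is exactly what is needed to match the claimed formula.

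Your route, however, is genuinely different from the paper's. The paper does not start from \eqref{EG_ThreeTimes}; instead it quotes Cover's ``turnout gap'' identity
\[
EG \;=\; S^* - 2V^* + S\left(\frac{T_A/m}{(T_A+T_B)/n} - 1\right)
\]
from \cite{QuantifyingGerrymandering} as a black box and then performs a short algebraic manipulation (reciprocating the parenthesized fraction, inserting $\rho = \tfrac{mT_B}{(n-m)T_A}$, and clearing denominators) to reach the stated form. Your approach is more self-contained: you re-derive the content of Cover's identity directly from the paper's own equation \eqref{EG_ThreeTimes} rather than importing it, and the intermediate form $\tfrac12 + \tfrac{S}{S+(1-S)\rho} - 2V$ you obtain is actually a bit cleaner than the one the paper passes through. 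The cost is a few more lines of bookkeeping; the benefit is that a reader need not consult \cite{QuantifyingGerrymandering}. One small omission: the paper explicitly notes the degenerate case $S=1$ (where $T_B=0$ and $\rho$ is undefined, yet the formula still reads $EG = S^* - 2V^*$); your derivation implicitly assumes $0<S<1$ so that $T_w$, $T_\ell$, and $\rho$ all make sense, which is harmless but worth a sentence.
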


\begin{proof}
The proof uses Cover's expression of $EG$ using the ``turnout gap'' \cite{QuantifyingGerrymandering}.  Suppose there are $n$ districts with party $A$ winning districts $1, 2, \dots, m$ and party $B$ winning districts $m+1, m+2, \dots, n$.  Let $T_i$ be the turnout in district $i$ and let $T_P$ be the total turnout in districts that party $P$ wins, $P \in \{A,B\}$ so that $T_A = \sum_{i=1}^mT_i$ and $T_B = \sum_{j=m+1}^nT_j$.  In the section on the Turnout Gap in \cite{QuantifyingGerrymandering}, Cover shows:
\begin{equation*}
EG = S^* -2V^*+S\left(\frac{\frac{T_A}{m}}{\frac{T_A+T_B}{n}}-1\right)
\end{equation*}
First suppose that $S \not=1$.  A little algebra (recalling that $S = \frac{m}{n}$, $1-S = \frac{n-m}{n}$, and $\rho = \frac{mT_B}{(n-m)T_A}$) gives:
\begin{align*}
EG &=  S^* -2V^*+S\left(\left(\frac{\frac{T_A+T_B}{n}}{\frac{T_A}{m}}\right)^{-1}-1\right) \\
&=  S^* -2V^*+S\left(\left(S+ \frac{\frac{T_B}{n}}{\frac{T_A}{m}}\right)^{-1}-1\right) \\
&=  S^* -2V^*+\frac{S}{1-S}\left(\left(\frac{S}{1-S}+\rho\right)^{-1}-1+S\right) \\
&= S^* -2V^* +  \frac{S(1-S)(1-\rho)}{S(1-\rho)+\rho}
\end{align*}
Finally, note that if $S=1$ then $T_B=0$ and $m=n$ so that Cover's equation gives $EG = S^*-2V^*$.  Thus the Theorem is proved.
\end{proof}

We can see that $EG = S^* -2V^*$ \emph{only} when $\rho = 1$, $S=0$, or $S=1$.  Recall that $\rho = 1$ when the average turnout in a district that $A$ lost is the same as the average turnout in a district that $A$ won.  Thus, the result that $EG =S^* -2V^*$ exactly when $\rho = 1$ slightly generalizes the fact that $EG=S^*-2V^*$ when turnout in each district is the same\cite{MR3699778}, and is a re-wording of the result that $EG=S^*-2V^*$ exactly when average turnout in districts that party $A$ won is the same as the overall average turnout\cite{QuantifyingGerrymandering}.  Theorem \ref{EGCalculationTheorem} gives the following corollary: 

\begin{Corollary}\label{FixedFact}
Suppose $\rho$ is fixed.  Then the efficiency gap satisfies the efficiency principle.
\end{Corollary}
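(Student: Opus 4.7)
The plan is to reduce the Corollary to a pair of one-variable monotonicity statements using the closed form supplied by Theorem \ref{EGCalculationTheorem}. Once $\rho$ is held fixed, that formula
\[
EG = S^{*} - 2V^{*} + \frac{S(1-S)(1-\rho)}{S(1-\rho)+\rho}
\]
exhibits $EG$ as an explicit smooth function of just $V$ and $S$. EP1 then asks for $\partial EG/\partial S > 0$ (with $V$, $\rho$ held fixed), while EP2 asks for $\partial EG/\partial V < 0$ (with $S$, $\rho$ held fixed); these two inequalities are therefore the entire content to verify, after which the mean value theorem (or direct substitution into the closed form) compares any two elections with a common $\rho$ as required by EP1 and EP2.

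EP2 is immediate, since $V$ appears only via the $-2V^{*}$ term and so $\partial EG/\partial V = -2 < 0$. For EP1 I would differentiate the rational summand $g(S) = (1-\rho)S(1-S)/[(1-\rho)S + \rho]$ by the quotient rule and add the $1$ contributed by $\partial S^{*}/\partial S$, yielding a single rational expression with denominator $[(1-\rho)S+\rho]^{2}$. My expectation is that, after expanding $[(1-\rho)S+\rho]^{2}$ and the numerator of $g'(S)$ and collecting in powers of $S$, the cross-terms cancel and the numerator collapses to the constant $\rho$. This would give $\partial EG/\partial S = \rho/[(1-\rho)S+\rho]^{2}$, strictly positive for $\rho > 0$, which is EP1.

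The main obstacle is precisely that algebraic collapse. A priori the numerator of $1 + g'(S)$ is a polynomial in $S$ whose coefficients involve $\rho$ and $1-\rho$ with mixed signs, so it is not obvious from inspection that everything save $\rho$ must cancel. I would carry out the simplification systematically by setting $a = 1 - \rho$, writing both pieces of the numerator as polynomials in $S$, and matching coefficients of $S^{0}$, $S^{1}$, and $S^{2}$ to confirm the cancellation explicitly. Once that is verified, the positivity of $\partial EG/\partial S$ gives EP1, the identity $\partial EG/\partial V = -2$ gives EP2, and the Corollary follows.
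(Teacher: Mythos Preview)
Your proposal is correct and is essentially the paper's own proof: the paper likewise invokes the closed form from Theorem~\ref{EGCalculationTheorem}, records $\partial EG/\partial V = -2$ for EP2, and states $\partial EG/\partial S = \rho/\bigl(S(1-\rho)+\rho\bigr)^{2} > 0$ for EP1. The algebraic collapse you anticipate does occur exactly as you predict (with $a=1-\rho$, the numerator of $1+g'(S)$ simplifies to $\rho^{2}+a\rho=\rho$), so your plan goes through without obstruction.
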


\begin{proof}
We can see this by calculating that the partial derivative of $EG$ with respect to $S$ is
\begin{equation*}
\frac{\rho}{(S(1-\rho)+\rho)^2}
\end{equation*}
which is positive, showing that the EG satisfies EP1.  The partial derivative of $EG$ with respect to $V$ is -2 (which is negative), showing that it satisfies EP2.
\end{proof}

%

\section{Additional Comments}\label{CommentsSection}

It is well-known that there are various factors affecting voter turnout, including voter ID laws, the availability of conveniences like early voting and vote-by-mail, electoral competitiveness, voter demographics, and even the weather.  It is equally well-known that Hispanic voters have a considerably lower proportion of citizen voting age population (CVAP) per Census population than most other subgroups, and that in recent elections they tend to favor the Democratic party.  This very likely accounts for lower turnout in Democrat-won districts, and correspondingly large values of $\rho$, for Texas and Arizona in particular.

Given that lopsided turnout among different districts within a state is not uncommon (see Table \ref{TurnoutTable}), it is important to know how our tools intended to measure partisan gerrymandering are affected by voter turnout. We have shown that unequal turnout among districts causes EG to ``expect'' an exaggerated seat bonus for the party with lower turnout in the districts that it wins, which as we have seen is currently most often the Democratic party.

The results given here suggest that additional care should be taken when interpreting the numerical values of the efficiency gap.  We strongly caution against using a fixed numerical cutoff such as $|\text{EG}|>.08$ for detecting gerrymanders, and we argue that values of EG should not be compared from one state to another or between different historical periods because of the confounding effects of turnout ratios. 

\section*{Acknowledgments}  
The author would like to express her sincere and deep thanks to M. Duchin for helpful feedback and insightful questions that led to significant improvements.  The author would also like to thank A. Rappaport for bringing the paper \cite{MeasureMetricDebate} to her attention, J. Nagle for helpful preliminary comments, and E. McGhee for his comments.

\bibliographystyle{plain}
\bibliography{EGNotEfficient}

\end{document}